\newtheorem{theorem}{Theorem}
\newtheorem{lemma}{Lemma}
\newtheorem{proposition}{Proposition}
\newcommand{\qfd}{\hfill{\rule{1.8mm}{1.8mm}}}
\newenvironment{proof}[1][\bf{Proof.}]{\begin{trivlist}
\item[\hskip \labelsep {\bfseries #1}]}{\end{trivlist}}
\newcommand{\Xomit}[1]{}
\newcommand{\matA}{{\mathbf{A}}}
\newcommand{\matB}{{\mathbf{B}}}
\newcommand{\matP}{{\mathbf{P}}}
\newcommand{\matQ}{{\mathbf{Q}}}
\newcommand{\matR}{{\mathbf{R}}}
\newcommand{\matS}{{\mathbf{S}}}
\newcommand{\matT}{{\mathbf{T}}}
\newcommand{\matnull}{{\mathbf{0}}}
\newsavebox{\traitbox}
\begin{document}


\title{Bargaining Dynamics in Exchange Networks\footnote{A short version without proofs was presented as an invited paper at Allerton 2010.}} 

\author{Moez Draief\footnote{Department of Electrical and Engineering, Imperial College, London, United Kingdom, m.draief@imperial.ac.uk}  and    Milan Vojnovi\'c\footnote{Microsoft Research, Cambridge, United Kingdom, milanv@microsoft.com} }  

   \maketitle
\begin{abstract}                          
We consider a dynamical system for computing Nash bargaining solutions on graphs and focus on its rate of convergence. More precisely, we analyze the edge-balanced dynamical system by Azar et al and fully specify its convergence for an important class of elementary graph structures that arise in Kleinberg and Tardos' procedure for computing a Nash bargaining solution on general graphs. We show that all these dynamical systems are either linear or eventually become linear and that their convergence times are quadratic in the number of matched edges. 
\end{abstract}

\section{Introduction}
\label{sec:intro}

Bargaining and, in particular, the concept of Nash bargaining on general graphs has been the focus of much recent research in economics, sociology and computer science~\cite{ABCDP09,CKK09,KBBCM10b,KBBCM10,KT08}. In a bargaining system, players aim at making pairwise agreements to share a fixed wealth specific to each pair of players. Bargaining solutions provide predictions on how the wealth will be shared and how this sharing would depend on players' positions in a network describing some notion of relationships among players.  

The concept of Nash bargaining solution was introduced by Nash~\cite{N50} for two players, each having an exogenous, alternative profit at its disposal were they to disagree on how to share the wealth. Recent research has focused on the concept of Nash bargaining with multiple players where each player has alternative profits determined by trading opportunities with neighbors in a graph. In the computer science literature, Kleinberg and Tardos~\cite{KT08} were the first to establish various properties of Nash bargaining outcomes on general graphs. They also propose a polynomial-time algorithm for computing them, provided one exists. Follow up work aimed at introducing some local dynamics that are natural (so they, hopefully, have some connections with reality) and studied their convergence properties. In particular, Azar et al \cite{ABCDP09} considered the so called \emph{edge-balanced dynamics} and established various properties about fixed points and convergence but left open the characterization of the convergence rate. In a tandem of papers \cite{KBBCM10,KBBCM10b}, Kanoria et al considered an alternative, \emph{natural dynamics}, and established polynomial convergence time bounds under some generic assumptions. An open research question has been to gain a better understanding of convergence properties and obtain tight bounds on the convergence time for these types of systems. 

In this paper, we consider the edge-balanced dynamics of Azar et al~\cite{ABCDP09} over elementary graphs that arise in the decomposition procedure of Kleinberg and Tardos~\cite{KT08} which include a path, a cycle, a blossom and a bicycle (see Figures~\ref{fig:path}, \ref{fig:cycle}, \ref{fig:blossom} and \ref{fig:bi-cycle} for examples). It turns out that, for all these network structures, the dynamics is either linear or eventually becomes linear. Specifically, we show that the dynamics is \emph{linear} for a path and a cycle and is \emph{eventually linear} for a blossom and a bicycle (and characterize the time when this takes place). This allows us to fully characterize the rate of convergence by deploying well known spectral methods for linear systems. As a result, for all these elementary structures, we find that the convergence time is \emph{quadratic} in the number of matched edges.

\subsection{Outline of the Paper} In Section~\ref{sec:sys} we introduce system assumptions and overview relevant concepts, including the concept of Nash bargaining outcomes, local dynamics, and the KT procedure. Section~\ref{sec:KTsub} provides the characterization of the edge-balanced dynamics and convergence times for each of the elementary graphs of the KT decomposition. Finally, we discuss related work and conclude in Sections~\ref{sec:related} and~\ref{sec:conc} respectively. 

\section{System and Assumptions}
\label{sec:sys}

\subsection{Nash Bargaining Outcomes on Graphs}

We consider a graph $G = (V,E)$ where $V$ is the set of nodes and $E$ is the set of edges. Each node corresponds to a distinct player that participates in the trading game defined as follows. Each edge $(i,j)\in E$ is associated with a weight $w_{i,j}\geq 0$ representing the amount that can be shared between players $i$ and $j$ should these two players decide to trade with each other. The trading game is one-exchange meaning that each player attempts to make a pairwise agreement with at most one other player, which corresponds to a matching $M \subset E$ in the graph where $(i,j)\in M$ if and only if players $i$ and $j$ reached an agreement. We denote with $x_i$ the profit of player $i$ where $x_i \geq 0$ and let $\vec{x} = (x_i)_{i\in V}$ denote the vector of players' profits.

A balanced outcome or a Nash bargaining solution is a pair $(M,\vec{x})$ where $M$ is a matching in $G$ and $\vec{x}$ is a vector of players' profits. Such an outcome satisfies the following properties:
\begin{itemize}
\item {\bf Stability}: for every edge $(i,j)\in E$,
$$
x_i + x_j \geq w_{i,j}. 
$$
\item {\bf Balance}: for every $(i,j)\in M$, it holds that
$$
x_i - \max_{k\in V_i\setminus \{j\}}(w_{i,k} - x_k)_+
= x_j - \max_{k\in V_j\setminus \{i\}}(w_{j,k} - x_k)_+
$$ 
where, hereinafter, $V_i$ denotes the set of neighbors of a node $i$ and $(\cdot)_+:=\max(0,\cdot)$. 
\end{itemize}

The stability property means that there exists no player that can improve her profit by unilaterally deciding to trade with an alternative trading partner. The balance property originates from the Nash bargaining problem~\cite{N50} where two players $1$ and $2$ aim at a pairwise agreement to share a profit $w$ having outside profit options $r_1$ and $r_2$ in case of disagreement. The Nash bargaining solution is then for players $1$ and $2$ to share the surplus $w-r_1-r_2$ equally, if positive, i.e. receive profits $p_1 = r_1 + \frac{1}{2}(w - r_1 - r_2)_+$ and $p_2 = r_2 + \frac{1}{2}(w - r_1 - r_2)_+$, respectively. This allocation is balanced in the sense that $p_1 - r_1 = p_2 - r_2$, which is exactly the above asserted balance property where the outside profit options are determined by the values that players may extract through trading agreements with their neighbors.

\subsection{KT Procedure}
\label{sec:positivegap}

Nash's bargaining solutions on graphs are intimately related to maximum-weight matchings. In~\cite{KT08} it was found that the matching $M$ of a stable outcome $\vec{x}$ is a maximum-weight matching. Furthermore, whenever a stable outcome exists, a balanced outcome exists as well~\cite{KT08}. The outcome vector $\vec{x}$ can be seen as a feasible solution of a dual to the fractional relaxation of a maximum-weight matching (primal): 
$$
\begin{array}{rl}
\hbox{maximize} & \sum_{(i,j)\in E} w_{i,j} x_{i,j}\\
\hbox{over} & x_{i,j} \geq 0,\ (i,j)\in E\\
\hbox{subject to} & \sum_{j: (i,j)\in E} x_{i,j} \leq 1
\end{array}
$$
where the dual problem is the following linear problem with two variables per inequality:
$$
\begin{array}{rl}
\hbox{minimize} & \sum_{i\in V} x_i\\
\hbox{over} & x_i\geq 0,\ i\in V\\
\hbox{subject to} & x_i+x_j \geq w_{i,j},\ (i,j)\in E.
\end{array}
$$

In \cite{KT08}, it was established that a balanced outcome $(M,\vec{x})$ can be found in polynomial time by first finding a maximum-weight matching $M$ and then solving the above dual problem to find a balanced vector $\vec{x}$. The dual problem can be solved by an iterative procedure where each iteration maximizes the smallest slack as described in the following. 

We denote by $s_i$ the slack of  node $i$ defined by $$s_i = x_i - \max_{(l,i)\in E\setminus M}(w_{i,l} - x_l)_+$$ while the slack of edge $(i,j)$  denoted by $s_{i,j}$ is defined by  $$s_{i,j} = x_i+x_j-w_{i,j}\:.$$ Indeed, for a stable outcome $\vec{x}$, $s_{i,j}\geq 0$, for every $(i,j)\in E$. It is not difficult to observe that node and edge slacks satisfy $$s_i = \min(x_i,\min_{(i,l)\in E\setminus M}s_{i,l})\:.$$

The KT procedure for finding a balanced outcome proceeds by successively fixing the values $x_i$ for some nodes in $V$. This is allowed by the following key property~\cite{KT08}: if there exists a set $A\subset V$ and $\sigma \geq 0$ such that $s_i \leq \sigma$ for every $i\in A$ and $s_i \geq \sigma$ for $i\in V\setminus A$ and a vector $\vec{x}$ such that values $x_i$ are balanced in $A$, then there exists a vector $\vec{x}'$ such that $x_i' = x_i$ for every $i\in A$ that is a balanced outcome for $G$. 

The KT algorithm is sketched as follows. Let $\sigma \geq 0$ be a variable and let $A$ be a set of nodes for which values $x_i$ have been already assigned. The set $A$ is constructed such that no matched edge crosses the cut $(A,V\setminus A)$, i.e. for every node $i\in A$ there exists no node $j\in V\setminus A$ such that $(i,j)\in M$. Initially, $\sigma = 0$ and the set $A$ contains all the unmatched nodes. The algorithm then proceeds inductively with respect to the number of nodes with unassigned values as given by $|V\setminus A|$. Given $\sigma$ and $A$ the inductive step amounts to assigning values to nodes in $V\setminus A$ that maximize the minimum slack $\sigma' \geq \sigma$ which amounts to solving the following linear program 
\begin{equation}\label{eq-KTLP}
\begin{array}{rl}
\hbox{maximize} & \sigma'\\
\hbox{subject to} & x'_i \geq \sigma',\ i \in V\setminus A\\
& x'_i + x'_j = w_{i,j},\ (i,j)\in M\\
& x'_i + x'_j \geq w_{i,j} + \sigma',\ (i,j) \in E\setminus (M\cup E(A))\\
& x'_i = x_i, i \in A,
\end{array}\end{equation}
where $E(A)$ corresponds to the set of edges of the graph $G$ linking nodes in $A$.

For a fixed $\sigma'$, this is a linear inequalities' problem with at most two variables per inequality, for which polynomial algorithms exist. In particular, by results of Aspvall and Shilach~\cite{AS80}, for a given $\sigma'$, the system of inequalities is infeasible if there exists an infeasible simple loop in the graph construction described in~\cite{AS80}. A path is said to be a loop if the initial and final nodes are identical and is said to be simple if all intermediate nodes of this path are distinct. Furthermore, if a feasible solution exists than it can be constructed by finding the most constraining feasible simple loop. For the above system of inequalities, any such feasible simple loop is either a path, a cycle, a blossom or a bicycle. We refer to these as \emph{KT elementary graphs} and define them in the following:
\begin{itemize}
\item {\bf Path}. A path consists of alternating matchings with each of its end nodes anchored at either a node $i\in A$ or at a matched edge $(i,j)\in M$ such that $s_j = x_j$.\footnote{Recall that if for a matched edge $(i,j)\in E$, $i\in V\setminus A$, then also $j\in V\setminus A$, and vice versa.}
\item {\bf Cycle}. A cycle consists of an even number of nodes connected by a path of alternating matchings. 
\item {\bf Blossom}. A blossom is a concatenation of a cycle and a path (we refer to it as a stem) as follows. The cycle consists of an odd number of nodes that are connected by a cycle of alternating matchings started from a node (we call gateway) with an unmatched edge. The stem is a path of alternating matchings such that one end node is matched to the gateway node and the other end node is anchored as for a path.
\item {\bf Bicycle}. A bicycle is a concatenation of two blossoms by connecting the end nodes of their respective stems such that the connected stems form alternating matchings. 
\end{itemize}

The above described step is repeated until all the nodes are assigned values, i.e. until $V\setminus A = \emptyset$. Hence, the total number of such steps $k$ is at most the number of nodes $|V|$. At each step $l$, a KT elementary structure $C_l$ and maximum slack $\sigma_l$ are identified such that the $\sigma_l$'s form an non-decreasing sequence, $0 = \sigma_0 \leq \sigma_1 \leq \cdots \leq \sigma_k$. 

%
%

\subsection{Convergence}

We introduce a few elementary concepts about stability of dynamical systems in a somewhat informal manner and then define the notion of convergence time considered in this paper. We say that a dynamical system, according to which $\vec{x}(t)$ evolves over $t\geq 0$, is \emph{asymptotically stable}, if for every initial value $\vec{x}(0)$, there exists a point $\vec{x}^*$ such that 
$$
\lim_{t\rightarrow \infty}||\vec{x}(t)-\vec{x}^*|| = 0.
$$
The system is said to be \emph{globally asymptotically stable} if $\vec{x}^*$ is unique, i.e. does not depend on the initial value $\vec{x}(0)$. 

In particular, for a linear system $$\vec{x}(t+1)=\matA \vec{x}(t) +\vec{b}(t)$$ where $\matA$ is a given matrix and $\vec{b}(t)$ is a vector that may depend on $t$, we have that the system is globally asymptotically stable if the spectral radius of the matrix $\matA$ is smaller than $1$ (i.e. all eigenvalues are of modulo strictly smaller than $1$). The concepts of asymptotic stability and global asymptotic stability are standard, see~\cite{K01} for more details.

We say that the convergence to a point $\vec{x}^*$ is exponentially bounded if there exist $C > 0$ and $R>0$ such that for every initial value $\vec{x}(0)$, we have
$$
||\vec{x}(t) - \vec{x}^*|| \leq C e^{-R t}, \hbox{ for every } t \geq 0,
$$
where we refer to $R$ as the rate of convergence and call $T = 1/R$ the convergence time. Moreover, If $\vec{x}(t)$ evolves according to the aforementioned linear system then the rate of convergence is given by $(i)$ $R = \log(1/\rho(\matA))$ where $\rho(\matA)$ is the spectral radius of matrix $\matA$ if the system is globally asymptotically stable, and $(ii)$ $R = \log(1/\lambda_2(\matA))$ where $\lambda_2(\matA)$ is the modulus of the largest eigenvalue of matrix $\matA$ that is smaller than $1$, if the system is asymptotically stable.

\section{Edge-Balanced Dynamics for KT Elementary Graphs}
\label{sec:KTsub}

The {\em edge-balanced dynamics }was first considered by Rochford~\cite{R84} and Cook and Yamagishi~\cite{CY92}, this dynamical process assumes that players already agreed on a matching $M$ and are negotiating the value of the outcome $\vec{x}$. Hence, each matched player $i$ is assigned a trading partner, which we denote with $p_i$. A version of this dynamics in discrete-time can be represented as follows. For a fixed $0<\alpha \leq 1$ and given an initial value $\vec{x}(0)$, for $i = 1,2,\ldots,n$ and $t = 0,1,\ldots$, we have that
\begin{equation}
x_i(t+1) = x_i(t) + \alpha \left\{\left[y_i(t)+\frac{1}{2}(w_{i,p_i}-y_i(t)-y_{p_i}(t))\right]_0^{w_{i,p_i}}-x_i(t)\right\}
\label{equ:edgebalanced}
\end{equation}
where $y_l(t)$ is the best alternate value that a matched player $l$ may get at time $t$ by trading with her other neighbors, i.e.
$$
y_l(t) = \max_{k: (l,k)\in E\setminus M} (w_{l,k} - x_k(t))_+
$$ 
and we use the notation $[\cdot]_a^b = \min(\max(\cdot,a),b)$, for $a\leq b$. 

It is not difficult to observe that if players $i$ and $j$ are matched, then $x_i(t) + x_{j}(t) = w_{i,j}$ is time invariant, i.e. if the latter holds for a time $t$, then it still holds for time $t+1$. Note that the dynamics is not necessarily consistent with Nash bargaining solution for every time $t$ as for a matched pair $(i,j)$, the edge-surplus $w_{i,j}-y_i(t)-y_j(t)$ is allowed to be negative; the only requirement is that the allocation $y_{i}(t) + \frac{1}{2} (w_{i,j}-y_i(t)-y_j(t))$ is in $[0,w_{i,j}]$. However, the edge surpluses are guaranteed to be positive for $t$ large enough~\cite{ABCDP09}. Besides the dynamics described in (\ref{equ:edgebalanced}) is guaranteed to converge to a fixed point that corresponds to a Nash bargaining solution, see~\cite[Theorems 1,2]{ABCDP09}.

In this section, we will observe that for all the elementary graphs of the KT decomposition, the values held by the nodes eventually evolve according to a \emph{linear} discrete-time dynamical system, i.e., for a given matrix $\matA$ and a vector $\vec{b}(t)$, $\vec{x}(t)$ evolves according to
\begin{equation}
\vec{x}(t+1) = \matA \vec{x}(t) + \vec{b}(t).
\label{equ:lin}
\end{equation}
We will find that for a path and a cycle the dynamics is linear for every time $t\geq 0$ while for a blossom and a bicycle there exists a finite time $T_0\geq 0$ such that the dynamics is linear for every $t\geq T_0$. The asymptotic behavior is determined by spectral properties of matrix $\matA$. Note that it suffices to consider the spectrum of matrix $\matA$ for $\alpha = 1$. This is because, for every given $0<\alpha\leq 1$, $\lambda' = 1 - \alpha + \alpha \lambda$ is an eigenvalue and $\vec{v}$ is an eigenvector of the matrix $\matA$, where $\lambda$ is an eigenvalue and $\vec{v}$ is an eigenvector of the matrix $\matA$ under $\alpha = 1$. We will see that for every KT elementary graph, the eigenvalues of matrix $\matA$, under $\alpha = 1$, are located in the interval $[-1,1]$ and will show that $-1$ can be an eigenvalue only for a cycle with an even number of matched edges or a bicycle with an even number of matched edges in each of its loops. In the latter two cases, for $\alpha = 1$, there is no convergence to a limit point as the asymptotic behavior is periodic because of the eigenvalue $-1$. This is ruled out by choosing the smoothing parameter $0<\alpha < 1$, making all the eigenvalues strictly larger than $-1$, and thus ensuring convergence to a limit point.

It is also worth noting that if the system is globally asymptotically stable and if when, the term $b(t)$ in (\ref {equ:lin}) is equal to $b$ a constant, that does not depend on t, then the fixed point to which the system converges is given by $(I-A)^{-1} b$ where $(I-A)^{-1}=\sum_{k\geq 0} A^k$. We would like to mention that the analysis of the fixed points of these dynamical systems is relevant to understanding the bargaining power of nodes depending on their position in the network.

Finally, we note that for our results in this section, we assume uniform edge weights and under this assumption, without loss of generality, we let $w_e = 1$, for every $e\in E$. 

\subsection{Path}
\label{sec:path}

We consider a path with $n$ matched edges with boundary values $x^+(t)$ and $x^-(t)$ as illustrated in Figure~\ref{fig:path}. In this case, the evolution of the node values $\vec{x}(t)$ boils down to a discrete-time linear dynamical system (\ref{equ:lin}) where $\matA$ is the $n\times n$ \emph{symmetric tridiagonal} matrix 
\begin{equation}
\matA=\left( \begin{array}{ccccc}
0 & 1/2 & 0& \cdots &0\\
1/2 & \ddots  & \ddots &\ddots &\vdots\\
0 & \ddots &\ddots &\ddots& 0\\
\vdots & \ddots & \ddots & \ddots & 1/2\\
0&\cdots& 0& 1/2 & 0 \end{array} \right)\:
\label{equ:Atridiagonal}
\end{equation}
and $\vec{b}(t) = (\frac{1-x^+(t)}{2},\underbrace{0,\ldots,0}_{n-2},\frac{x^-(t)}{2})^T$.

\begin{figure}[ht!]
\begin{center}
\vspace*{5mm}
\psfrag{c1}{$x^+$}
\psfrag{c2}{$x^-$}
\psfrag{x1}{$x_1$}
\psfrag{x2}{$x_2$}
\psfrag{xn}{$x_n$}
\psfrag{cdots}{$\cdots$}
\includegraphics[width=3in]{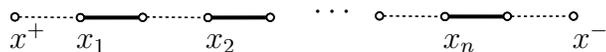}
\vspace*{-2mm}
\caption{A path with boundary conditions. Note that the values at the ends of each matched edge are given by $x_i$ and $1-x_i$, the latter being omitted in the figure.}
\vspace*{5mm}
\label{fig:path}
\end{center}
\end{figure}

The eigenvalues of matrix $\matA$ are
$
\lambda_k = \cos\left(\frac{\pi k}{n+1}\right),\ $ $k = 1,2,\ldots,n, 
$
with the corresponding orthonormal eigenvectors 
$$
\vec{v}_{k} = \sqrt{\frac{2}{n+1}}\left(\sin\left(\frac{\pi k}{n+1}\right),\ldots,\sin\left(\frac{\pi k n}{n+1}\right)\right)^T.
$$
Note that every eigenvalue is of modulo smaller than $1$. This implies asymptotic stability for every $0 < \alpha \leq 1$. From the above spectrum, we have the following characterization of the convergence time:

\begin{theorem} For a path of $n$ matched edges and every $0 < \alpha \leq 1$, the convergence time is 
$$
T = \frac{2}{\alpha \pi^2} n^2 \cdot [1 + O(1/n^2)].
$$
\label{thm:pathtime} 
\end{theorem}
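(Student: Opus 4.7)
My plan is to invoke the spectral characterization of the convergence rate stated in the preliminaries. The matrix $\matA$ in (\ref{equ:Atridiagonal}) corresponds to the case $\alpha = 1$; for a general smoothing parameter $0 < \alpha \le 1$ the iteration matrix is $\matA_\alpha := (1-\alpha)\matI + \alpha\matA$, with eigenvalues $\mu_k(\alpha) = 1 - \alpha + \alpha\cos(\pi k/(n+1))$ for $k = 1, \ldots, n$. A direct check shows $|\mu_k(\alpha)| < 1$ for every $k$ and every $0 < \alpha \le 1$, so the system is globally asymptotically stable and the convergence rate is $R = -\log \rho(\matA_\alpha)$.

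Next I would identify which eigenvalue attains the spectral radius. The two candidates are $\mu_1(\alpha) = 1 - \alpha(1 - \cos(\pi/(n+1)))$ (closest to $+1$) and $\mu_n(\alpha) = 1 - \alpha(1 + \cos(\pi/(n+1)))$ (closest to $-1$). Since $|\mu_n(\alpha)| \to |1 - 2\alpha| < 1$ as $n \to \infty$ for $\alpha \in (0,1)$ while $|\mu_1(\alpha)| \to 1$, one has $\rho(\matA_\alpha) = \mu_1(\alpha)$ for $n$ sufficiently large. The boundary case $\alpha = 1$ is the delicate one: here $|\mu_n(1)| = |\mu_1(1)| = \cos(\pi/(n+1))$, but since $-1$ is never an eigenvalue of the path's $\matA$ the spectral radius is still $\cos(\pi/(n+1))$ and the same expansion applies.

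The third step is a routine Taylor expansion. With $\theta = \pi/(n+1)$, using $1 - \cos\theta = \theta^2/2 + O(\theta^4)$ gives
$$
\rho(\matA_\alpha) \;=\; 1 \;-\; \frac{\alpha\pi^2}{2(n+1)^2}\bigl[1 + O(1/n^2)\bigr],
$$
and then $-\log(1 - x) = x + O(x^2)$ yields
$$
R \;=\; \frac{\alpha\pi^2}{2(n+1)^2}\bigl[1 + O(1/n^2)\bigr].
$$
Inverting and rewriting $(n+1)^2$ in terms of $n^2$ produces the stated convergence time $T = (2/(\alpha\pi^2))\,n^2[1 + O(1/n^2)]$ in the paper's convention for lower-order corrections.

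The main obstacle I anticipate is not the expansion itself, which is standard once the eigenvalues and eigenvectors of the tridiagonal $\matA$ are in hand, but rather the bookkeeping that $\mu_1(\alpha)$ truly dominates $|\mu_n(\alpha)|$ uniformly in $\alpha$ for all sufficiently large $n$, together with the separate handling of $\alpha = 1$ where the two extremal eigenvalues tie in modulus without either equalling $\pm 1$, so that convergence (rather than persistent oscillation) is still guaranteed.
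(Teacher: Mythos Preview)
Your proposal is correct and follows essentially the same route as the paper's proof: identify the eigenvalues of the tridiagonal $\matA$, pass to $\matA_\alpha$, note that the spectral radius is attained at $k=1$, and Taylor-expand $1-\cos(\pi/(n+1))$. The only difference is that the paper's proof also includes the short derivation of the eigenvalues via the ansatz $\lambda=\cos\phi$, $v_i=\sin(i\phi)$, whereas you take these as already established (they are indeed stated just before the theorem); your treatment of which of $\mu_1(\alpha)$ and $\mu_n(\alpha)$ dominates is in fact more careful than the paper's one-line assertion.
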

\begin{proof}

An eigenvalue $\lambda$ and associated eigenvector $\vec{v}$ of matrix $\matA$ satisfy
\begin{eqnarray*}
\lambda v_1 &=& \frac{1}{2}v_2\\
\lambda v_i &=& \frac{1}{2}v_{i-1} + \frac{1}{2}v_{i+1},\ 1 < i < n\\
\lambda v_n &=& \frac{1}{2}v_{n-1}.
\end{eqnarray*}
Using $\lambda = \cos(\phi)$ and $v_i = \sin(\phi i)$, for $\phi\geq 0$ in the above equations, along with some elementary trigonometric calculus, it readily follows that $\phi = \frac{\pi k}{n+1}$, for $k = 1,2,\ldots,n$.

Since $-1 < \lambda_k < 1$ for every $k$ and $\lambda_1 > 0$ has the largest modulo, the convergence time is given by $T = \log(1-\alpha + \alpha \lambda_1)$. Noting that $\lambda_1 = 1 - \frac{\pi^2}{2n^2} + O(1/n^4)$, the asserted result follows.
\qfd
\end{proof}

From this theorem, we observe that the convergence time is quadratic in the number of matched edges. Moreover, if $x_+=x_-=0$ then it is not difficult to see that the dynamics convergence to the fixed point given by $$(I-A)^{-1}(1/2,0,\ldots,0)^T.$$

\subsection{Cycle}

For an alternating cycle between $n$ nodes ($n$ even), the dynamics of node values $\vec{x}(t)$ boils down to a linear dynamical system (\ref{equ:lin}) where $\matA$ is the following \emph{circulant} matrix, for $n = 2$, $\matA = \left(\begin{array}{cc}0 & 1\\ 1 & 0\end{array}\right)$, and otherwise  
\begin{equation}
\matA=\left( \begin{array}{ccccccc}
0 & 1/2 & 0 & \cdots & 0 & 0 & 1/2\\
1/2 & 0 & 1/2 & \ddots &\ddots & \ddots & 0\\
0 & 1/2 & \ddots & \ddots & \ddots & \ddots & 0\\
\vdots & \ddots & \ddots & \ddots & \ddots & \ddots & \vdots\\
0 & \ddots & \ddots & \ddots & \ddots & 1/2 & 0\\
0 & \ddots & \ddots & \ddots & 1/2 & 0 & 1/2\\
1/2 & 0 & 0 & \cdots & 0 & 1/2 & 0 
\end{array} \right)\:
\label{equ:Acycle}
\end{equation}
and where vector $\vec{b} = \vec{0}$. Note that in this case 
$
\vec{x}(t) = \matA^t \vec{x}(0),\ \hbox{for } t \geq 0.
$

\begin{figure}[htbp]
\centering
\vspace*{5mm}
\psfrag{x1}{$x_1$}
\psfrag{x2}{$x_2$}
\psfrag{xn}{$x_n$}
\includegraphics[width=1in]{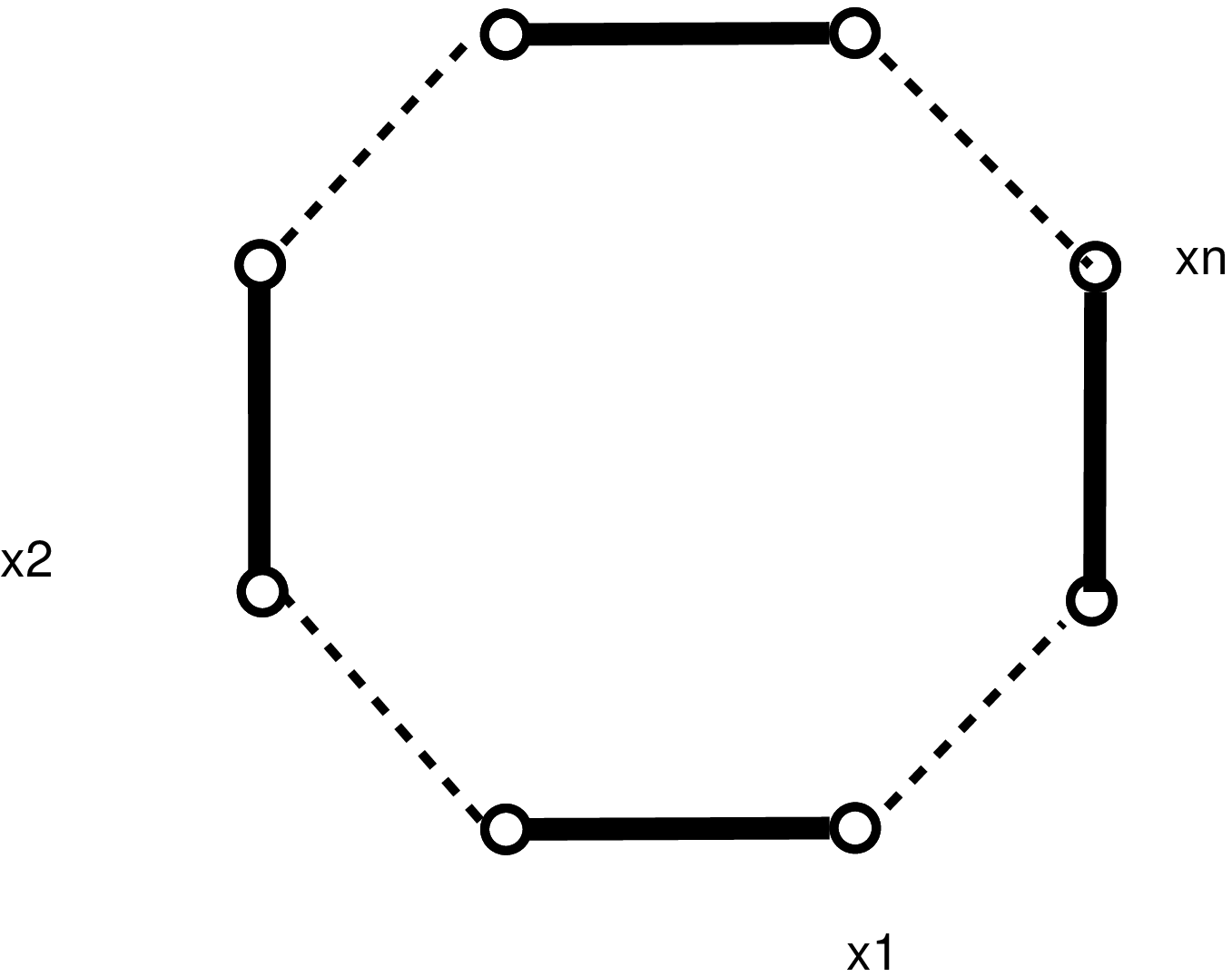}
\vspace*{5mm}
\caption{A cycle.}
\label{fig:cycle}
\end{figure}

By using similar arguments as for a path, it is not difficult to establish that the eigenvalues of matrix $\matA$ are 
$
\lambda_k = \cos\left(\frac{2\pi(k-1)}{n}\right), $ $k = 1,2,\ldots,n,
$
with the corresponding orthonormal eigenvectors
$$
\vec{v}_k = 
\left\{
\begin{array}{l}
\frac{1}{\sqrt{n}}\left(1, 1,\ldots, 1, 1\right)^T, \hbox{ if } k = 1\\
\frac{1}{\sqrt{n}}\left(1, -1,\ldots, 1,-1\right)^T, \hbox{ if } k = 1 + n/2\\
\sqrt{\frac{2}{n}}\left(1, \cos\left(\phi_k\right),\ldots, \cos\left(\phi_k(n-1)\right)\right)^T, \hbox{ o.w.}
\end{array}
\right .
$$
where for ease of notation, $\phi_k = \frac{2\pi(k-1)(n-1)}{n}$.

Using the spectral decomposition of the symmetric matrix $\matA$ (see \cite{K01} for details), we have
\begin{equation}
\vec{x}(t) = \sum_{k=1}^n \lambda_k^t \vec{v}_k \vec{v}_k^T\vec{x}(0).
\label{equ:xspec}
\end{equation}
We distinguish two cases:
\begin{itemize}
\item {\bf Case 1}: $n$ is even. In this case, $\lambda_k = -1$, for $k = 1 + n/2$, and $\lambda_k > -1$, otherwise. From (\ref{equ:xspec}), we have
$$
\vec{x}(t) = \left(\vec{v}_1 \vec{v}_1^T + (-1)^t \vec{v}_{1+n/2}\vec{v}_{1+n/2}^T\right)\vec{x}(0) + o(1).
$$ 
Therefore, the asymptotic behavior is periodic.

\item {\bf Case 2}: $n$ is odd. In this case, $-1 < \lambda_k \leq 1$, for every $k$, and thus we have asymptotic convergence to the limit point, $\lim_{t\rightarrow \infty}x_i(t) = \frac{1}{n}\sum_{j=1}^n x_j(0)$, for every $i$. 
\end{itemize}

In view of the above observations, we note that for even $n$, we need to assume that $\alpha$ is strictly smaller than $1$ in order to rule out asymptotically periodic behavior, while for odd $n$, we can allow for $\alpha = 1$. Moreover, it is worth noting that the limit point, in both cases, depends on the initial condition and is given by 
$ \frac{1}{n}\sum_{j=1}^n x_j(0) (1,\ldots,1)^T.$

The following result shows that in like manner as for a path, the convergence time is quadratic in the number of matched edges, but note that it is four times smaller, asymptotically for large $n$. 

\begin{theorem} For cycle graph with $n$ matched edges  and  for $\alpha\in(0,1) $  the convergence time is 
$$
T = \frac{1}{\alpha 2\pi^2} n^2 \cdot [1 + O(1/n^2)].
$$ 
\end{theorem}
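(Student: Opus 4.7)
The plan is to follow the same strategy as in Theorem~1, exploiting the spectrum of the smoothed operator $(1-\alpha)\matI + \alpha \matA$, whose eigenvalues are simply affine shifts of those of $\matA$. The reason we must impose $\alpha<1$ strictly here is that the cycle has $-1$ as an eigenvalue of $\matA$ when $n$ is even, which smoothing maps to $1-2\alpha$ and so pulls strictly inside the open unit disc.

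First, I would recall the already established spectrum $\lambda_k=\cos(2\pi(k-1)/n)$, $k=1,\ldots,n$, so that the effective eigenvalues are $\mu_k := 1-\alpha\bigl(1-\cos(2\pi(k-1)/n)\bigr)$. The value $\mu_1=1$ is simple and its eigenspace is spanned by $(1,\ldots,1)^T$, reproducing the limit point $\frac{1}{n}\sum_j x_j(0)\,(1,\ldots,1)^T$ already identified in the preceding discussion. The system is therefore asymptotically (but not globally asymptotically) stable, and the relevant rate is $R=\log(1/\lambda_2(\matA))$ with $\lambda_2(\matA)=\max_{k\neq 1}|\mu_k|$.

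Next, I would pin down $\lambda_2(\matA)=\mu_2$. By monotonicity of $\cos$ on $[0,\pi]$, the value $\cos(2\pi/n)$, attained at $k=2$ and $k=n$, is the largest $\lambda_k$ strictly less than $1$, hence $\mu_2$ is the $\mu_k$ ($k\neq 1$) closest to $1$. For any $k$ with $\lambda_k\le 0$ a direct computation gives $|\mu_k|\le \max(1-\alpha,|1-2\alpha|)$, a quantity strictly less than $1$ and independent of $n$; for any $k$ with $0<\lambda_k<\cos(2\pi/n)$ one has $\mu_k<\mu_2$ trivially. Since $\mu_2\to 1$ as $n\to\infty$ while $\max(1-\alpha,|1-2\alpha|)$ is fixed inside $(0,1)$, we obtain $\lambda_2(\matA)=\mu_2$ for all sufficiently large $n$, and small-$n$ exceptions are absorbed into the $O(1/n^2)$ term of the final formula.

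Finally, Taylor expansion yields $1-\cos(2\pi/n)=\frac{2\pi^2}{n^2}+O(1/n^4)$, and then $R=-\log\mu_2=\frac{2\pi^2\alpha}{n^2}\bigl[1+O(1/n^2)\bigr]$, so that $T=1/R$ produces the asserted expression. The main delicate step is the middle one: verifying uniformly in $n$ that $\mu_2$ dominates all other $|\mu_k|$ for every $\alpha\in(0,1)$. Once one notices that the negative-$\lambda_k$ branch contributes moduli bounded away from the unit circle by a constant depending only on $\alpha$, while $\mu_2$ approaches $1$, the comparison reduces to a routine inequality rather than a conceptual difficulty.
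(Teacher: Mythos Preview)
Your proposal is correct and follows essentially the same approach the paper implicitly relies on: the paper does not give a separate proof of this theorem but simply states it after computing the eigenvalues $\lambda_k=\cos(2\pi(k-1)/n)$ and invoking the same reasoning as for the path (Theorem~1), together with the convention from the Convergence section that $R=\log(1/\lambda_2(\matA))$ for an asymptotically stable system. Your argument supplies exactly the missing details---identifying $\mu_2=1-\alpha(1-\cos(2\pi/n))$ as the dominant sub-unit modulus, handling the negative-$\lambda_k$ branch via the $\alpha$-dependent bound, and Taylor-expanding---so there is nothing to correct or compare.
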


\subsection{Blossom}
\label{sec:blossom}

A blossom is a concatenation of a cycle and a path (we refer to it as a stem); see Figure~\ref{fig:blossom} for an example. We consider a blossom with $n$ matched edges in the stem and $m$ matched edges in the loop. We refer to the node that connects the stem and the loop as a \emph{gateway} node. The matched edges of the stem are enumerated as $1,2,\ldots,n$ along the stem towards the gateway note. We let $x_i$ denote the value of the end node of an edge $i$ of the stem that is connected to a node towards the open end of the stem. Similarly, we enumerate matched edges of the loop as $1,2,\ldots,m$ and let $y_i$ denote the value of the node that appears first on a matched edge $i$ as we go along the loop in the clockwise direction.

\begin{figure}[htbp]
\vspace*{5mm}
\centering
\psfrag{x1}{$x_1$}
\psfrag{x2}{$x_2$}
\psfrag{xn}{$x_n$}
\psfrag{y1}{$y_1$}
\psfrag{y2}{$y_2$}
\psfrag{ym}{$y_m$}
\includegraphics[width=3in]{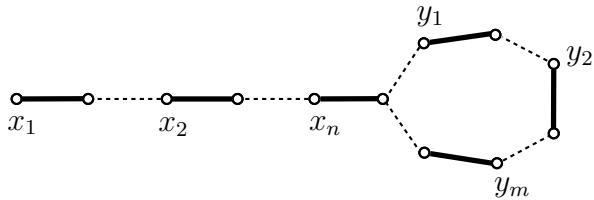}
\vspace*{5mm}
\caption{A blossom.}
\label{fig:blossom}
\end{figure}

It can be observed that node values $\vec{x}(t)$ and $\vec{y}(t)$ evolve according to the following non-linear dynamical system:
\begin{equation}
\begin{split}
x_1(t+1) &= \frac{x_2(t)}{2}\\
x_i(t+1) &= \frac{x_{i-1}(t) + x_{i+1}(t)}{2},\quad 1 < i < n\\
x_n(t+1) &= \frac{1+x_{n-1}(t) - \max[1-y_1(t),y_m(t)]}{2}\\
y_1(t+1) &= \frac{x_n(t) + y_{2}(t)}{2}\\
y_i(t+1) &= \frac{y_{i-1}(t) + y_{i+1}(t)}{2}, \quad 1 < i < m\\
y_m(t+1) &= \frac{1 + y_{m-1}(t) - x_n(t)}{2}.
\end{split}
\label{equ:blossomdyn}
\end{equation}

Note that the system is non-linear only because of the maximum operator that acts in the update for the node matched to the gateway node, which connects the stem and the loop of the blossom. The maximum operator is over the values of the nodes that are in the loop and are matched to the neighbors of the gateway node, $1-y_1(t)$ and $y_m(t)$. It turns out that, eventually, one of these two values is larger or equal to the other and, hence, the system dynamics becomes linear. This is shown in the following lemma. Note that the sum $y_1(t) + y_m(t)$, if smaller than or equal to $1$, indicates $\max(1-y_1(t),y_m(t)) = 1-y_1(t)$, and otherwise, $\max(1-y_1(t),y_m(t)) = y_m(t)$. 

\begin{theorem} For a blossom with $n$ matched edges in the stem and $m$ matched edges in the loop, for every initial value $(\vec{x}(0),\vec{y}(0))$, the sum of node values $y_1(t) + y_m(t)$ satisfies:
\begin{enumerate} 
\item $y_1(t) + y_m(t)$, for $t\geq 0$, is autonomous of $\vec{x}(t)$, $t\geq 0$.
\item $\lim_{t\rightarrow \infty} y_1(t) + y_m(t) = 1$.
\item The asymptotic rate of convergence is $\frac{\pi^2}{2m^2}$.
\item There exists a time $T_0\geq 0$ such that either $y_1(t)+y_m(t) \leq 1$ or $y_1(t)+y_m(t) \geq 1$ for every $t \geq T_0$.
\item $T_0 = O(m^2)$.
\end{enumerate}
\label{lem:y1m}
\end{theorem}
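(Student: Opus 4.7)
My plan is to exploit the mirror symmetry of the loop and reduce the nonlinear dynamics (\ref{equ:blossomdyn}) to an autonomous linear system in the ``symmetric sum'' variables
\[
s_k(t) := y_k(t) + y_{m+1-k}(t), \quad k = 1, \ldots, \lfloor m/2 \rfloor,
\]
augmented by the centre variable $\tilde y(t) := 2 y_{(m+1)/2}(t)$ when $m$ is odd. The key algebraic fact is that the stem coupling $x_n(t)$ enters the updates of $y_1(t+1)$ and $y_m(t+1)$ with opposite signs and therefore cancels in $s_1(t+1) = (1 + s_2(t))/2$; the interior sums evolve as $s_k(t+1) = (s_{k-1}(t) + s_{k+1}(t))/2$, and one obtains a boundary rule at the mirror centre by direct bookkeeping. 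This yields a closed linear recursion for $\vec s(t)$ (and $\tilde y(t)$ in the odd case) that is autonomous of the stem state $\vec x(t)$, which is item~(1).

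I would next shift by the unique fixed point $s_k^\ast = 1$ ($\tilde y^\ast = 1$) to obtain $\vec{\tilde s}(t+1) = \matM \vec{\tilde s}(t)$ with $\matM$ a tridiagonal matrix of size $N = \lceil m/2 \rceil$. A diagonal rescaling (needed for odd $m$, where $\tilde y$ couples through an off-diagonal entry $1$ rather than $1/2$) symmetrises $\matM$, and the ansatz $v_k = \sin(\phi k)$ solves the interior three-term recursion with $\lambda = \cos(\phi)$. The Dirichlet condition $v_0 = 0$ at the left is automatic, and the loop-centre reflection at the right quantises, in both parities, $\phi_j = (2j+1)\pi/(m+1)$ for $j = 0, \ldots, N-1$. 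All eigenvalues $\lambda_j = \cos((2j+1)\pi/(m+1))$ lie strictly in $(-1, 1)$, so $\vec{\tilde s}(t) \to \vec 0$ and item~(2) follows; the largest eigenvalue $\lambda_0 = \cos(\pi/(m+1)) = 1 - \pi^2/(2m^2) + O(1/m^4)$ yields the rate in item~(3).

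For items~(4) and~(5), expand $s_1(t) - 1 = \sum_j c_j \lambda_j^t v_{j,1}$ in the orthonormal eigenbasis of $\matM$. Since $v_{0,1} = \sin(\pi/(m+1)) > 0$ and $\lambda_0 > 0$ is strictly larger in modulus than every other eigenvalue (because $|\cos(3\pi/(m+1))| < \cos(\pi/(m+1))$, and symmetrically at the negative end of the spectrum), the leading term $c_0 \lambda_0^t v_{0,1}$ dominates the remainder whenever $c_0 \neq 0$. The spectral gap $\lambda_1/\lambda_0 = 1 - \Theta(1/m^2)$ then yields a time $T_0 = O(m^2)$ after which $s_1(t) - 1$ has the constant sign $\mathrm{sgn}(c_0)$. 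If $c_0 = 0$, one iterates the same argument on the first non-vanishing spectral coefficient with positive eigenvalue, which adds at most another $O(m^2)$ delay.

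\textbf{Anticipated difficulty.} The most delicate step is treating the two parities of $m$ uniformly: for even $m$ the reflection produces a diagonal entry $1/2$ at the bottom corner of $\matM$, whereas for odd $m$ the centre variable $\tilde y$ requires the rescaling $\tilde y \mapsto \tilde y/\sqrt 2$ before $\matM$ becomes symmetric. Verifying that the two cases yield the identical quantisation $\phi_j = (2j+1)\pi/(m+1)$ and the same leading constant $\pi^2/(2m^2)$ is mostly bookkeeping, but has to be done carefully. A secondary subtlety in items~(4)--(5) is that the bound $T_0 = O(m^2)$ carries a constant that depends on the initial condition through the ratio of the leading spectral coefficient to the total norm of $\vec{\tilde s}(0)$, so one must argue that this constant can be made uniform (or nearly so) on bounded initial data.
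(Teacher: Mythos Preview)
Your route to items (1)--(3) is correct and genuinely different from the paper's. The paper keeps the full $m$-dimensional loop system $\vec y(t+1)=\matA\vec y(t)+\vec b(t)$ (with $\matA$ the path tridiagonal matrix and $\vec b(t)$ carrying $x_n(t)$), spectrally decomposes it, and then observes that in the sum $y_1+y_m$ the even-indexed eigenmodes drop out because $v_{k,m}=-v_{k,1}$ for $k$ even, while for $k$ odd the $x_n$-contributions in $\vec v_k^T\vec b(s)$ cancel; this yields the explicit formula $y_1(t)+y_m(t)=1-\frac{2}{m+1}\sum_{i}f_{2i-1}(\vec y(0))\lambda_{2i-1}^t$. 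You instead quotient by the mirror symmetry first, obtaining an autonomous $\lceil m/2\rceil$-dimensional linear system whose spectrum is exactly the odd-indexed set $\{\cos((2j{+}1)\pi/(m{+}1))\}$. Both arguments are valid and arrive at the same eigenvalues; yours makes autonomy transparent at the outset, while the paper's gives a closed-form expression for $y_1(t)+y_m(t)$ directly.

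There is, however, a real gap in your treatment of items (4)--(5). Your claim that ``$\lambda_0>0$ is strictly larger in modulus than every other eigenvalue'' is false when $m$ is odd: in that case the spectrum $\{\cos((2j{+}1)\pi/(m{+}1)):j=0,\dots,(m{-}1)/2\}$ is symmetric about $0$, and in particular $j=(m{-}1)/2$ gives $\cos(m\pi/(m{+}1))=-\cos(\pi/(m{+}1))=-\lambda_0$, so $|\lambda_{N-1}|=\lambda_0$ exactly. Hence the leading asymptotic of $s_1(t)-1$ is $v_{0,1}\lambda_0^t\bigl(c_0+(-1)^t c_{N-1}\bigr)$ (note $v_{0,1}=v_{N-1,1}$), and your single-mode dominance argument does not by itself force an eventually constant sign; one must additionally control the relation between $c_0$ and $c_{N-1}$ (or pass to the next pair $\pm\lambda_1$ when $|c_0|=|c_{N-1}|$) to conclude. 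The paper's proof treats only the case $m$ even in detail and defers the remaining parities to ``similar arguments'', so it does not commit your explicit error, but you should be aware that the odd-$m$ case needs a separate argument beyond what you sketched.
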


As an aside remark, note that the value sum $\sum_{i=1}^m y_i(t)$ for the loop nodes evolves autonomously from $\vec{x}(t)$, $t\geq 0$. To see this, from (\ref{equ:blossomdyn}) note
$$
\sum_{i=1}^m y_i(t+1) = \sum_{i=1}^m y_i(t) - \frac{1}{2}(y_1(t)+y_m(t)) + \frac{1}{2}.
$$
and that the claim follows from Theorem~\ref{lem:y1m}~item~1, saying that $y_1(t)+y_m(t)$ evolves autonomously of $\vec{x}(t)$, $t\geq 0$.

The theorem derives from an explicit characterization of $y_1(t)+y_m(t)$, for every $t\geq 0$, which we present in the following:

\begin{lemma} Given initial value $\vec{y}(0)$, for every $t\geq 0$,
\begin{equation}
y_1(t) + y_m(t) = 1 - \frac{2}{m+1}\sum_{i=1}^{\lceil\frac{m}{2}\rceil} f_{2i-1}(\vec{y}(0)) \lambda_{2i-1}^t 
\label{equ:sumy}
\end{equation}
where
$
f_k(\vec{y}) = 1+\lambda_k - 2\sqrt{1-\lambda_k^2}\sqrt{\frac{m+1}{2}}\vec{v}_k^T\vec{y}.
$
\label{lem:sumy}
\end{lemma}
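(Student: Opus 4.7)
My plan is to linearize the loop dynamics, diagonalize using the path-graph eigenbasis from Section~\ref{sec:path}, and exploit a parity symmetry of those eigenvectors that makes the modes driving $y_1(t)+y_m(t)$ autonomous from $x_n(t)$.

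First I would rewrite the $y$-component of (\ref{equ:blossomdyn}) as the affine linear system
$$\vec{y}(t+1) = \matM \vec{y}(t) + \vec{c}(t), \qquad \vec{c}(t) = \tfrac{x_n(t)}{2}\vec{e}_1 + \tfrac{1-x_n(t)}{2}\vec{e}_m,$$
where $\matM$ is the $m\times m$ symmetric tridiagonal path matrix of the form (\ref{equ:Atridiagonal}), with eigenvalues $\lambda_k=\cos(\pi k/(m+1))$ and orthonormal eigenvectors $\vec{v}_k$ having entries $v_{k,i}=\sqrt{2/(m+1)}\sin(\pi k i/(m+1))$. The crucial identity is $v_{k,m}=(-1)^{k+1}v_{k,1}$, obtained from $\sin(\pi km/(m+1))=\sin(\pi k-\pi k/(m+1))$. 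It yields two simultaneous consequences: (i) $(\vec{e}_1+\vec{e}_m)^T\vec{v}_k$ equals $2v_{k,1}$ for odd $k$ and vanishes for even $k$, so only odd modes can contribute to $y_1(t)+y_m(t)$; and (ii) $\vec{v}_k^T\vec{c}(t)=v_{k,1}/2$ for odd $k$, independently of $x_n(t)$. Hence, for every odd $k$, the coefficient $\alpha_k(t):=\vec{v}_k^T\vec{y}(t)$ satisfies the autonomous scalar recursion $\alpha_k(t+1)=\lambda_k\alpha_k(t)+v_{k,1}/2$.

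I would then solve each of these recursions explicitly to obtain $\alpha_k(t)=\lambda_k^t\alpha_k(0)+\frac{v_{k,1}}{2(1-\lambda_k)}(1-\lambda_k^t)$ and assemble $y_1(t)+y_m(t)=\sum_{k\text{ odd}}2v_{k,1}\alpha_k(t)$. Matching this expression to (\ref{equ:sumy}) is then a routine simplification using $v_{k,1}=\sqrt{2(1-\lambda_k^2)/(m+1)}$ together with $v_{k,1}^2/(1-\lambda_k)=2(1+\lambda_k)/(m+1)$ (from $\sin^2\theta=(1-\cos\theta)(1+\cos\theta)$). The constant part of the sum collapses to $1$ via the identity $\sum_{k\text{ odd}}v_{k,1}^2/(1-\lambda_k)=1$, which I would verify cleanly by solving $(\matI-\matM)\vec{w}=(\vec{e}_1+\vec{e}_m)/2$: the solution is the all-ones vector $\vec{w}=\mathbf{1}$, so expanding $(\vec{e}_1+\vec{e}_m)^T\vec{w}=2$ in the eigenbasis of $\matM$ and using that only odd modes survive the projection gives the required identity.

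The main obstacle is bookkeeping---threading the normalizations so that $f_k$ comes out with exactly the stated prefactor $\sqrt{1-\lambda_k^2}\sqrt{(m+1)/2}$. All the structural content lives in the parity-based decoupling of odd and even eigenmodes, and once Lemma~\ref{lem:sumy} is established, items 1--3 of Theorem~\ref{lem:y1m} drop out as direct corollaries of the explicit formula (\ref{equ:sumy}).
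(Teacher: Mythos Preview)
Your proposal is correct and follows essentially the same route as the paper's proof: both write the loop dynamics as an affine system driven by the path matrix, diagonalize in the eigenbasis $\{\vec{v}_k\}$, exploit the parity identity $v_{k,m}=(-1)^{k+1}v_{k,1}$ to kill the even modes and to make the odd-mode forcing $\vec{v}_k^T\vec{c}(t)=v_{k,1}/2$ independent of $x_n(t)$, sum the resulting geometric series, and finally check that the constant term equals~$1$. The only cosmetic difference is that you verify $\frac{2}{m+1}\sum_{k\text{ odd}}(1+\lambda_k)=1$ by observing that $(\matI-\matM)\mathbf{1}=(\vec{e}_1+\vec{e}_m)/2$ and expanding $(\vec{e}_1+\vec{e}_m)^T\mathbf{1}=2$ spectrally, whereas the paper invokes direct trigonometric computation.
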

\begin{proof}

The part of the system $\vec{y}(t)$ evolves as the following non-autonomous linear system
$
\vec{y}(t+1) = \matA \vec{y}(t) + \vec{b}(t)
$
where $\matA$ is a tridiagonal matrix that corresponds to a path of $m$ matched edges and $\vec{b}(t) = (x_n(t)/2,\underbrace{0,\ldots,0}_{m-2},(1-x_n(t))/2)^T$. 
 
Since $\matA$ is a symmetric matrix, we can use the spectral decomposition 
$
\matA = \sum_{k=1}^m \lambda_k \vec{v}_k \vec{v}_k^T
$
where $\lambda_1,\lambda_2,\ldots,\lambda_m$ are the eigenvalues and $\vec{v}_1,\vec{v}_2,\ldots,\vec{v}_m$ are the orthonormal eigenvectors of matrix $\matA$, which we identified in Section~\ref{sec:path}.

Using the spectral decomposition, we note
$$
y_i(t) = \sum_{k=1}^m \lambda_k^t v_{k,i} \vec{v}_k^T \vec{y}(0) + \sum_{s=0}^{t-1} \lambda_k^{t-s-1} v_{k,i} \vec{v}_k^T \vec{b}(s)
$$
where $v_{k,i} =\sqrt{\frac{2}{m+1}} \sin\left(\frac{\pi k}{m+1}i\right)$ is the $i$-th coordinate of the eigenvector $\vec{v}_k$. Summing up $y_1(t)$ and $y_m(t)$, we obtain
\begin{eqnarray*}
 y_1(t) + y_m(t)
&=& \sum_{k=1}^m (v_{k,1}+v_{k,m}) \left(\lambda_k^t \vec{v}_k^T \vec{y}(0) + \sum_{s=0}^{t-1} \lambda_k^{t-s-1} \vec{v}_k^T \vec{b}(s)\right)\\
&=& \sum_{k\ \hbox{odd}} 2v_{k,1} \left(\lambda_k^t \vec{v}_k^T \vec{y}(0) + \sum_{s=0}^{t-1} \lambda_k^{t-s-1} \vec{v}_k^T \vec{b}(s)\right)
\end{eqnarray*}
where the last inequality is because of the fact $v_{k,m} = v_{k,1}$ for $k$ odd and $v_{k,m} = - v_{k,1}$ for $k$ even. Furthermore, 
\begin{eqnarray*}
\vec{v}_k^T \vec{b}(s) &=& v_{k,1} \frac{x_n(s)}{2} + v_{k,m} \frac{1-x_{n}(s)}{2} \frac{v_{k,1}}{2} \hbox{ for } k \hbox{ odd}.
\end{eqnarray*}
Therefore, 
\begin{eqnarray*}
y_1(t) + y_m(t)
&=& \sum_{k\ \hbox{odd}} \left(\lambda_k^t 2v_{k,1}\vec{v}_k^T \vec{y}(0) + \sum_{s=0}^{t-1} \lambda_k^{t-s-1} v_{k,1}^2\right)\\
&=& \sum_{k\ \hbox{odd}} \left(\lambda_k^t 2v_{k,1}\vec{v}_k^T \vec{y}(0) + \frac{1-\lambda_k^t}{1-\lambda_k} v_{k,1}^2\right)\\
&=& \sum_{k\ \hbox{odd}} \left(\lambda_k^t 2\sqrt{\frac{2}{m+1}}\sqrt{1-\lambda_k^2}\vec{v}_k^T \vec{y}(0)  +(1-\lambda_k^t)\frac{2}{m+1}(1+\lambda_k)\right)\\
&=& \frac{2}{m+1}\sum_{k\ \hbox{odd}} (1+\lambda_k) -\frac{2}{m+1} \sum_{k\ \hbox{odd}} \left(1+\lambda_k - 2\sqrt{\frac{m+1}{2}}\sqrt{1-\lambda_k^2}\vec{v}_k^T \vec{y}(0)\right)\lambda_k^t.
\end{eqnarray*}
It remains only to show that 
$$
\frac{2}{m+1}\sum_{k\ \hbox{odd}} (1+\lambda_k) = 1
$$
which follows readily by elementary trigonometric calculations.
\qfd

\end{proof}

We now give the proof of Thereom \ref{lem:y1m}.
\begin{proof}[Proof of Thereom \ref{lem:y1m}.]

The statements of the theorem derive from Lemma~\ref{lem:sumy} as follows. Item 1 clearly holds as the function (\ref{equ:sumy}) depends only on the initial value $\vec{y}(0)$. Item~2 follows from (\ref{equ:sumy}) because all the eigenvalues $\lambda_k$ are real and with modulo strictly smaller than $1$. Item~3 holds from the fact that the largest modulo eigenvalue of matrix $\matA$ is $\lambda_1 = \cos\left(\frac{\pi}{m+1}\right) = 1 - \frac{\pi^2}{2m^2} + O(1/m^4)$ and hence $R = \log(1/\lambda_1) = \frac{\pi^2}{2m^2} + O(1/m^4)$. Item~4 holds as the sum in (\ref{equ:sumy}) is asymptotically dominated by the largest modulo eigenvalue $\lambda_{2i-1}$ such that $\vec{v}_{2i-1}^T\vec{y}(0) \neq 0$, i.e. the mode associated to the eigenvalue $\lambda_{2i-1}$ is excited. Let us consider the case where such an eigenvalue is $\lambda_1$ and $m$ is even; the other cases follow by similar arguments. From Lemma~\ref{lem:sumy}, we have $y_1(t) + y_m(t)=$
$$1 - \frac{2}{m+1} \lambda_1^t    \left(f_1(\vec{y}(0)) + \sum_{i=2}^{\lceil\frac{m}{2}\rceil}f_{2i-1}(\vec{y}(0))\left(\frac{\lambda_{2i-1}}{\lambda_1}\right)^t\right),$$
and, thus, since $|\lambda_{2i-1}/\lambda_1| < 1$, for every $1<i\leq \lceil m/2\rceil$, for $t$ large,
$$
y_1(t) + y_m(t) = 1 - \frac{2}{m+1} \lambda_1^t \left[f_1(\vec{y}(0)) +  o(1)\right].
$$
Finally, item~5 holds as, for $m$ large enough,
$$
\gamma: =\max_i |\frac{\lambda_{2i-1}}{\lambda_1}|\leq \frac{\lambda_3}{\lambda_1} = 1 - \frac{4\pi^2}{m^2} + O(1/m^4).
$$ 
For an arbitrary $\epsilon > 0$, we have $|\lambda_{2i-1}/\lambda_1|^t \leq \epsilon$, for every $i > 1$, provided that time $t$ is such that
$$
t \geq \frac{\log\left(\frac{1}{\epsilon}\right)}{\log\left(\frac{1}{\gamma}\right)} =  \frac{\log\left(\frac{1}{\epsilon}\right)}{4\pi^2} m^2 [1+o(1)].
$$
Hence, $T_0 = O(m^2)$.

\qfd

\end{proof}

From Theorem~\ref{lem:y1m}~item~4, we have that the dynamics for a blossom is eventually according to the following linear system
$$
\left(\begin{array}{c}\vec{x}(t+1)\\\vec{y}(t+1)\end{array}\right) = \matA \left(\begin{array}{c}\vec{x}(t)\\\vec{y}(t)\end{array}\right) + \vec{b}
$$
where matrix $\matA$ and vector $\vec{b}$ assume one of the following two choices: 

\begin{itemize}
\item {\bf Case~1}: $(1-y_1(t)\geq y_m(t))$ 
\begin{equation}
\matA=
\left(
\begin{array}{cc}
\matT_n & \matP\\
\matQ & \matT_m
\end{array} 
\right)
\label{equ:Ablossom}
\end{equation}
with $\matT_n$ and $\matT_m$ tridiagonal matrices of paths of $n$ and $m$ matched edges, respectively, and 
$$
\hspace*{-5mm}\matP=\left(\begin{array}{cccc}
0 & 0 & \cdots & 0\\
0 & 0 & \cdots & 0\\
\vdots & \vdots & \cdots & \vdots\\
0 & 0 & \cdots & 0\\
\frac{1}{2} & 0 & \cdots & 0 
\end{array}\right)\:, 
\matQ=\left(\begin{array}{cccc}
0 & \cdots & 0 & \frac{1}{2}\\
0 & \cdots & 0 & 0\\
\vdots & \cdots &  \vdots & \vdots\\
0 & \cdots & 0 & 0\\
0 & \cdots & 0 & -\frac{1}{2} 
\end{array}\right).
$$
and $\vec{b} = (\underbrace{0,\ldots,0}_{n+m-1},1/2)^T$.

Moreover, if the initial condition is such that $1-y_1(0)\geq y_m(0)$, e.g. $y_i(t)=0$ for all $i$, then the fixed point is given by $(I-A)^{-1}\vec{b}. $
\item {\bf Case~2}: $(1-y_1(t) < y_m(t))$ same as under Case~1 but
$$
\matP=\left(\begin{array}{cccc}
0 & \cdots & 0 & 0\\
0 & \cdots & 0 & 0\\
\vdots & \cdots & \vdots & \vdots\\
0 & \cdots & 0 & 0\\
0 & \cdots & 0 & -\frac{1}{2} 
\end{array}\right)
$$
and $\vec{b} = (\underbrace{0,\ldots,0}_{n-1},1/2,\underbrace{0,\ldots,0}_{m-1},1/2)^T$.
Finally, it the initial condition is such that $1-y_1(0)< y_m(0)$, e.g. $y_i(t)=1$ for all $i$, then the fixed point is given by $(I-A)^{-1}\vec{b}. $
\end{itemize}

In the following we only consider Case~1 as the spectrum of matrix $\matA$ under Case~2 is exactly the same. We note that the eigenvalues of the matrix $\matA$ are $(\lambda_1,\lambda_2,\ldots,\lambda_{n+\lfloor m/2\rfloor},\mu_1,\mu_2,\ldots,\mu_{\lceil m/2\rceil})$ where
\begin{eqnarray*}
\lambda_k &=& \cos\left(\frac{2\pi k}{2n+m+1}\right),\ k = 1,\ldots, n + \lfloor m/2 \rfloor\\
\mu_k &=& \cos\left(\frac{\pi (2k-1)}{m+1}\right),\ k = 1,\ldots, \lceil m/2 \rceil
\end{eqnarray*}
with a proof provided in Section \ref{sec:eigbloss}.

It is noteworthy that all the eigenvalues have modulo strictly smaller and $1$, and thus, the system is globally asymptotically stable. We now characterize the convergence time from an instance at which the system became linear.

\begin{theorem}\label{th_blossom} For a blossom with $n$ matched edges in the stem and $m$ matched edges in the loop, for every $0 < \alpha \leq 1$, the convergence time $T$ satisfies: if $m$ is even, then
$$
T = \frac{2}{\alpha\pi^2}(2n+m)^2 \cdot [1 + o(1)]
$$
otherwise, for $m$ odd,
$$
T = \frac{2}{\alpha\pi^2} \max\left(m^2, \frac{1}{4}(2n + m)^2\right) \cdot [1 + o(1)].
$$
\end{theorem}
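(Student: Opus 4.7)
The plan is to exploit Theorem~\ref{lem:y1m} to reduce the analysis to a purely linear recursion and then apply the standard spectral convergence rate formula from Section~\ref{sec:sys}. Items 4 and 5 of Theorem~\ref{lem:y1m} give that after a burn-in time $T_0 = O(m^2)$ the dynamics becomes exactly $\vec{z}(t+1) = \matM \vec{z}(t) + \alpha \vec{b}$ with $\matM = (1-\alpha)\matI + \alpha\matA$ and $\matA$ the block matrix in (\ref{equ:Ablossom}); because $T_0$ grows only quadratically in $m$, it is absorbed into the $[1+o(1)]$ factor of the final asymptotic. Hence it suffices to compute the asymptotic convergence time of the linear phase alone.

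The rate of the linear iteration is $R = \log(1/\rho(\matM))$, where $\rho(\matM) = \max_j |1 - \alpha(1-\nu_j)|$ and $\nu_j$ runs over the eigenvalues of $\matA$ at $\alpha = 1$, i.e.\ the two explicit lists $\{\lambda_k\}$ and $\{\mu_k\}$ stated above (their derivation being deferred to Section~\ref{sec:eigbloss}). I would identify the eigenvalue that achieves this maximum strictly below $1$ in modulus, and then apply the Taylor expansions $\cos x = 1 - x^2/2 + O(x^4)$ and $\log(1/(1-u)) = u + O(u^2)$ to read off $T = 1/(\alpha(1-\nu_*))\cdot[1+o(1)]$ for the dominant eigenvalue $\nu_*$.

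The parity of $m$ determines which $\nu_*$ is dominant, and this is the main computation. For $m$ odd, the index $k$ in $\lambda_k = \cos(2\pi k/(2n+m+1))$ stops at $n+(m-1)/2$, so the most-negative $\lambda_k$ has the same modulus as $\lambda_1 = \cos(2\pi/(2n+m+1))$, and analogously $\mu_{(m+1)/2}$ and $\mu_1 = \cos(\pi/(m+1))$ share a modulus; taking the better of $\lambda_1$ and $\mu_1$ yields the two candidate times $(2n+m)^2/(2\alpha\pi^2)$ and $2m^2/(\alpha\pi^2)$, whose maximum reorganizes to the claimed $(2/(\alpha\pi^2))\max(m^2,(2n+m)^2/4)$. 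For $m$ even the extra index $k = n+m/2$ supplies $\lambda_{n+m/2} = -\cos(\pi/(2n+m+1))$, a negative eigenvalue whose modulus $1-\pi^2/(2(2n+m+1)^2)$ is strictly closer to $1$ than that of either $\lambda_1$ or $\mu_1$; it therefore controls the rate and produces the stated $T \sim 2(2n+m)^2/(\alpha\pi^2)$, essentially the path-type behavior on $2n+m$ matched edges.

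The main obstacle is the bookkeeping for the $m$-even case: one must verify that under the affine map $\nu \mapsto 1-\alpha(1-\nu)$ the dominance of $\lambda_{n+m/2}$ persists for every $\alpha \in (0,1]$ and that $-1$ itself is never attained as an eigenvalue of $\matM$ (which would prevent convergence to a limit point). The latter is precisely what earlier distinguishes a blossom from an even-length cycle or bicycle, and it is what makes the $[1+o(1)]$ asymptotic valid uniformly on $\alpha\in(0,1]$.
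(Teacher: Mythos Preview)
Your approach is essentially the paper's: identify the eigenvalue of largest modulus among the $\lambda_k$'s and $\mu_k$'s by a parity-of-$m$ case analysis, then Taylor-expand to obtain the asymptotic convergence time. The paper's proof is terser---it simply records
\[
\rho(\matA)=\begin{cases}\cos\!\bigl(\tfrac{\pi}{2n+m+1}\bigr),& m\text{ even},\\ \cos\!\bigl(\tfrac{2\pi}{2n+m+1}\bigr),& m\text{ odd},\ m\le 2n-1,\\ \cos\!\bigl(\tfrac{\pi}{m+1}\bigr),& m\text{ odd},\ m>2n-1,\end{cases}
\]
and asserts that the stated asymptotics follow---but the underlying computation is exactly what you describe.

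One remark on your final paragraph. The concern you raise about the $m$-even case is real, and the paper's proof does not address it: the paper works only at $\alpha=1$, where the negative eigenvalue $\lambda_{n+m/2}=-\cos\!\bigl(\pi/(2n+m+1)\bigr)$ indeed achieves $\rho(\matA)$. For a \emph{fixed} $\alpha<1$ and $n+m\to\infty$, however, the image $1-\alpha(1-\lambda_{n+m/2})\approx 1-2\alpha$ stays bounded away from $\pm 1$, while $1-\alpha(1-\mu_1)$ and $1-\alpha(1-\lambda_1)$ tend to $1$; so the dominance of $\lambda_{n+m/2}$ does \emph{not} persist, and the spectral radius of $(1-\alpha)\matI+\alpha\matA$ is governed by the positive eigenvalues instead. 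In other words, your proposed verification would fail, and the $m$-even formula as written holds cleanly only at $\alpha=1$; the paper's proof glosses over this point rather than resolving it. Your caution here is warranted.
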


{\bf Observations}. The result implies that the convergence time is $O((n+m)^2)$, i.e. quadratic in the number of matched edges. There is a significant difference with regard to whether the number of matched edges in the loop, $m$, is even or odd. The convergence is slower for $m$ even. Specifically, if the length of the stem is at least twice the length of the loop, the convergence time is larger for a factor $4$. For a fixed $n$, the convergence time is asymptotically $\frac{2}{\alpha\pi^2} m^2$ as for a path of length $m$ which is intuitive. Likewise, if $m$ is fixed and odd, the convergence time is asymptotically $\frac{2}{\alpha\pi^2} n^2$ as for a path of length $n$ and thus also in conformance to intuition. 

\begin{proof} 

For the eigenvalues $\lambda_1,\lambda_2,\ldots, \lambda_{n+\lfloor m/2\rfloor}$, it is readily checked that 
$$
\max_k|\lambda_k| = -\lambda_{n+\lfloor m/2\rfloor} = \cos\left(\frac{(1 + 1_{m \hbox{ odd }})\pi}{2n+m+1}\right)
$$
where $1_{m \hbox{ odd }}$ stands for the indicator that $m$ is odd, while, on the other hand,
$$
\max_k |\mu_k| = \mu_1 = \cos\left(\frac{\pi}{m+1}\right).
$$
Therefore, the spectral radius of matrix $\matA$, $\rho(\matA) = \max(\max_k|\lambda_k|,\max_k |\mu_k|)$ is given by
$$
\rho(\matA) = \left\{
\begin{array}{ll}
\cos\left(\frac{\pi}{2n+m+1}\right), & m \hbox{ even }\\
\cos\left(\frac{2\pi}{2n+m+1}\right), & m \hbox{ odd }, m \leq 2n-1\\
\cos\left(\frac{\pi}{m+1}\right), & m \hbox{ odd }, m > 2n -1.
\end{array}
\right .
$$
The asserted asymptotic follows from the last identities.
\qfd
\end{proof}

\subsection{Bicycle}
\vspace*{-3mm}
\begin{figure}[htbp]
\centering
\psfrag{x1}{$x_1$}
\psfrag{x2}{$x_2$}
\psfrag{xn}{$x_n$}
\psfrag{y1}{$y_1$}
\psfrag{y2}{$y_2$}
\psfrag{ym}{$y_m$}
\psfrag{z1}{$z_1$}
\psfrag{z2}{$z_2$}
\psfrag{zl}{$z_l$}
\includegraphics[width=3in]{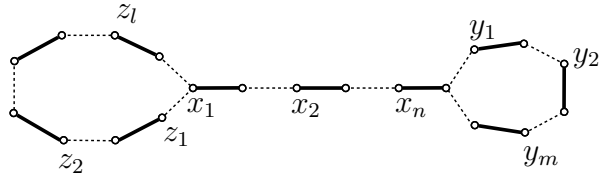}
\caption{A bicycle.}
\label{fig:bi-cycle}
\end{figure}

A bicycle graph consists of two loops that are connected by a path. Without loss of generality, we refer to one of the loops as loop 1 and to other as loop 2 and refer to the path as a cross-bar; see Figure~\ref{fig:bi-cycle} for an illustration. Notice that a bicycle graph corresponds to a concatenation of two blossoms by connecting the end nodes of their respective stems so that a cross-bar is formed of alternating matchings. We let $l$ and $m$ be the number of matched edges in loop 1 and loop 2, respectively, and let $n$ be the number of matched edges of the cross-bar. The values of the end nodes of the matched edges are denoted by $\vec{z}(t) = (z_1(t),z_2(t),\ldots,z_l(t))^T$,  $\vec{x}(t) = (x_1(t), x_2(t),\ldots, x_n(t))^T$ and $\vec{y}=(y_1(t),y_2(t),\ldots,y_m(t))^T$ for loop 1, cross-bar, and loop 2, respectively. See Figure~\ref{fig:bi-cycle} for positions of the corresponding nodes.

We note that for a bicycle the system evolves according to the following non-linear system:
\begin{equation}
\begin{split}
z_1(t+1) &= \frac{1+z_2(t)-x_1(t)}{2}\\
z_i(t+1) &= \frac{z_{i-1}(t) + z_{i+1}(t)}{2}, 1 < i < l\\
z_l(t+1) &= \frac{z_{l-1}(t) + x_1(t)}{2}\\ 
x_1(t+1) &= \frac{x_2(t) + \max[1-z_1(t), z_l(t)]}{2}\\
& \hbox { plus other updates as for blossom (\ref{equ:blossomdyn})}.
\end{split}
\label{equ:bicycledyn}
\end{equation}
In this case, the non-linearity originates because of two gateway nodes that connect the cross-bar with loops, each such gateway node having two alternative profit options with nodes in the loops. Similarly as for a blossom we have that eventually the dynamics becomes linear as stated in the following:

\begin{proposition} For a bicycle with $l$ and $m$ matched edges in loops and $n$ matched edges in the cross-bar, there exists a time $T_0\geq 0$ such that for every $t\geq T_0$, $(\vec{z}(t),\vec{x}(t), \vec{y}(t))$ evolves according to a linear system. Furthermore, $T_0 = O(\max(l,m)^2)$.
\end{proposition}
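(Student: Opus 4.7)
The plan is to reduce the analysis to two independent applications of the blossom result (Theorem~\ref{lem:y1m}), one for each of the loops of the bicycle. The non-linearity in the bicycle dynamics~(\ref{equ:bicycledyn}) arises solely from the two maximum operators in the updates of the gateway nodes $x_1$ and $x_n$, which select between $1-z_1(t)$ and $z_l(t)$ for loop~1 and between $1-y_1(t)$ and $y_m(t)$ for loop~2. Hence it suffices to show that after a time of order $\max(l,m)^2$ both sign conditions for $z_1(t)+z_l(t)-1$ and $y_1(t)+y_m(t)-1$ stabilize.

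The first step is to observe that $\vec{z}(t)$ satisfies a non-autonomous linear recursion $\vec{z}(t+1) = \matT_l \vec{z}(t) + \vec{b}^z(t)$, where $\matT_l$ is the tridiagonal path matrix of~(\ref{equ:Atridiagonal}) for $l$ matched edges and $\vec{b}^z(t) = ((1-x_1(t))/2, 0, \ldots, 0, x_1(t)/2)^T$. Up to a relabeling of the two endpoint coordinates, this is structurally identical to the loop system already analyzed for the blossom. Consequently the derivation of Lemma~\ref{lem:sumy} applies verbatim: because of the parity property $v_{k,l} = (-1)^{k+1} v_{k,1}$ of the orthonormal eigenvectors of $\matT_l$, the dependence on the cross-bar driver $x_1(t)$ cancels in the sum $z_1(t)+z_l(t)$, yielding a closed form involving only $\vec{z}(0)$ and the odd-indexed eigenvalues $\lambda_k = \cos(\pi k/(l+1))$. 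An identical argument yields the analogous closed form for $y_1(t)+y_m(t)$ with $m$ in place of $l$.

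Given these closed forms, Items~4 and~5 of Theorem~\ref{lem:y1m} apply independently to each loop and furnish times $T_0^{(1)} = O(l^2)$ and $T_0^{(2)} = O(m^2)$ such that for every $t \geq T_0^{(1)}$ the quantity $z_1(t)+z_l(t)-1$ has constant sign, and for every $t \geq T_0^{(2)}$ the quantity $y_1(t)+y_m(t)-1$ has constant sign. Setting $T_0 := \max(T_0^{(1)}, T_0^{(2)}) = O(\max(l,m)^2)$, both maxima in~(\ref{equ:bicycledyn}) simplify to fixed linear expressions for all $t \geq T_0$, so the entire coupled dynamics becomes of the form~(\ref{equ:lin}) with a suitable matrix and drift vector obtained by assembling the two loop blocks and the cross-bar path block in the manner already illustrated by~(\ref{equ:Ablossom}).

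The main obstacle is the decoupling observation underlying the first step: one must verify that the two loops are genuinely autonomous from the cross-bar at the level of the boundary-sum variables $z_1+z_l$ and $y_1+y_m$. This is delicate because each loop is driven by a single cross-bar node ($x_1$ for loop~1, $x_n$ for loop~2), and the driving signal cancels only after summing the two boundary coordinates with equal weights using the specific parity structure of the path eigenvectors. Once this autonomy is established, the bicycle reduces to two uncoupled copies of the blossom boundary-sum analysis and the remaining argument is a straightforward invocation of the spectral bounds from Theorem~\ref{lem:y1m}.
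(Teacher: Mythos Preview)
Your proposal is correct and follows essentially the same approach as the paper. The paper's own argument is precisely to invoke Theorem~\ref{lem:y1m} independently for each loop, justified by the observation that both $z_1(t)+z_l(t)$ and $y_1(t)+y_m(t)$ evolve autonomously of the cross-bar as in Lemma~\ref{lem:sumy}; you spell out the parity-cancellation mechanism behind that autonomy more explicitly than the paper does, but the logic and the resulting $T_0 = O(\max(l,m)^2)$ bound are identical.
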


This observation follows from Theorem~\ref{lem:y1m} applied to each loop of the bicycle. This can be done because both $y_1(t) + y_m(t)$ and $z_1(t) + z_l(t)$ evolve autonomously as given by Lemma~\ref{lem:sumy} for $y_1(t)+y_m(t)$ and analogously for $z_1(t)+z_l(t)$.

We have shown that the dynamics for a bicycle is eventually according to a linear system, which is specified as follows:
\begin{equation}
\left(\begin{array}{c}\vec{z}(t+1)\\\vec{x}(t+1)\\\vec{y}(t+1)\end{array}\right) = \matA \left(\begin{array}{c}\vec{z}(t)\\\vec{x}(t)\\\vec{y}(t)\end{array}\right) + \vec{b}
\label{equ:bicyclelin}
\end{equation}
where 
$$
\matA=
\left(
\begin{array}{ccc}
\matT_l & \matQ' & \matnull\\
\matP' & \matT_n & \matP\\
\matnull & \matQ & \matT_m
\end{array} 
\right)
$$
with the given matrix blocks defined by 
$$
\left(
\begin{array}{cc}
\matT_l & \matQ'\\
\matP' & \matT_n\\
\end{array} 
\right)
\hbox{ and }
\left(
\begin{array}{ccc}
\matT_n & \matP\\
\matQ & \matT_m
\end{array} 
\right)
$$
are the matrices that correspond to two blossoms formed by loop~1 and cross-bar, and cross-bar and loop~2, respectively. 

The pair $(\matA,\vec{b})$ admits four possible values, corresponding to all possible combinations of two cases for each of the loops (Case~1 and Case~2 in Section~\ref{sec:blossom}): 
\begin{enumerate} 
\item Both $\matP'$ and $\matP$ as in Case~1\\$\vec{b} = (1/2,\underbrace{0,\ldots,0}_{l+n+m-2},1/2)^T$,
\item $\matP'$ as in Case~1, $\matP$ as in Case~2,\\ $\vec{b} = (1/2,\underbrace{0,\ldots,0}_{l-1},1/2,\underbrace{0,\ldots,0}_{m + n-2},1/2)^T$,
\item $\matP'$ as in Case~2, $\matP$ as in Case~1,\\ $\vec{b} = (1/2,\underbrace{0,\ldots,0}_{l + n - 2},-1/2,\underbrace{0,\ldots,0}_{m-1},1/2)^T$,
\item Both $\matP'$ and $\matP$ as in Case~2,\\ $\vec{b} = (1/2,\underbrace{0,\ldots,0}_{l-1},1/2,\underbrace{0,\ldots,0}_{n-2},-1/2,\underbrace{0,\ldots,0}_{m-1},1/2)^T$.
\end{enumerate}

In the following, we will only consider the case under item~1 as the same end results hold for other cases. The eigenvalues of the matrix $\matA$ are given by
\begin{eqnarray*}
&& \cos\left(\frac{\pi (2k-1)}{l+1}\right), \:k=1,\ldots, \lceil l/2 \rceil,\\ 
&& \cos\left(\frac{\pi (2k-1)}{m+1}\right), \:k=1,\ldots, \lceil m/2 \rceil,\\
&& \cos\left(\frac{2\pi k}{2n+l+m}\right), \:k=1,\ldots, n+\lfloor l/2\rfloor+\lfloor m/2\rfloor
\end{eqnarray*}
which we establish in Section \ref{sec:eig_bicycle}.

Remark that in any case all the eigenvalues are strictly smaller than $1$. On the other hand, if both $l$ and $m$ are even, then $-1$ is an eigenvalue with eigenvector $(1,-1,1,-1,\dots,1,-1)^T$, and otherwise, all the eigenvalues are strictly larger than $-1$. Therefore, if both $l$ and $m$ are even, then the asymptotic behavior of system (\ref{equ:bicyclelin}) is periodic, while otherwise, it is globally asymptotically stable.

As a byproduct, similarly to Theorem \ref{th_blossom}, we can establish that from an instance at which the system became linear, the convergence time scales as follows.

\begin{theorem} For a bicycle with $n$ matched edges in the stem and $m$ and $l$ matched edges in the loops, the convergence time $T$ satisfies the following. If $m$ or $l$ is even, then for every $0 < \alpha < 1$,
$$
T = \frac{2}{\alpha\pi^2}(2n+m+l)^2 \cdot [1 + o(1)]
$$
otherwise, if $m$ and $n$ are odd, then for every $0 < \alpha \leq 1$,
$$
T = \frac{2}{\alpha\pi^2} \max\left(m^2,l^2, \frac{1}{4}(2n + m+ l)^2\right) \cdot [1 + o(1)].
$$
\end{theorem}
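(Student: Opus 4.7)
The plan is to follow the exact blueprint of Theorem~\ref{th_blossom}: once the eigenvalues are available (they are listed just above the theorem statement and derived in Section~\ref{sec:eig_bicycle}), the convergence time is read off from the largest-in-modulus eigenvalue of $\matA$ that is strictly less than $1$, using the identity $T = 1/\log(1/\rho(\matA)) \cdot (1/\alpha)[1+o(1)] \sim 2/(\alpha\theta^2)$ whenever this eigenvalue has the form $\cos\theta$ with $\theta$ small. As explained in Section~\ref{sec:KTsub}, it suffices to carry out the analysis at $\alpha=1$ and then insert the smoothing factor $\alpha$.

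First, I would compute the maximum modulus within each of the three eigenvalue families. For the two loop families $\cos(\pi(2k-1)/(l+1))$ and $\cos(\pi(2k-1)/(m+1))$, the maximum is attained at $k=1$ and equals $\cos(\pi/(l+1))$ and $\cos(\pi/(m+1))$ respectively, exactly as for the $\mu_1$ eigenvalue in the blossom proof. For the cross-bar family $\lambda_k = \cos(2\pi k/N)$ with $N := 2n+l+m$ and $k = 1,\ldots, M := n + \lfloor l/2\rfloor + \lfloor m/2\rfloor$, the extreme values occur at $k=1$ (giving $\cos(2\pi/N)$) and $k=M$. The key elementary calculation here is that $2M$ equals $N$, $N-1$, or $N-2$ according to whether both of $l,m$ are even, exactly one is even, or both are odd, so that $\lambda_M$ equals $-1$, $-\cos(\pi/N)$, or $-\cos(2\pi/N)$, respectively.

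Next, I would assemble these into the overall spectral radius. When at least one of $l,m$ is even, the dominant eigenvalue of $\matA$ in modulus (other than $1$) is $\cos(\pi/N)$: this is $|\lambda_M|$ in the exactly-one-even sub-case, and in the both-even sub-case the eigenvalue $\lambda_M = -1$ is neutralised by the assumption $\alpha<1$ into smoothed modulus $|1-2\alpha|<1$ that contributes only an $N$-independent constant to $T$, so the dominant contribution still comes from $\cos(2\pi/N)$ at $k=1,M-1$. Since $N \ge \max(l,m)+1$, this cross-bar contribution dominates the loop contributions, and the Taylor expansion $\cos(\pi/N) = 1 - \pi^2/(2N^2) + O(1/N^4)$ yields the asserted $T \sim \frac{2}{\alpha\pi^2}(2n+l+m)^2$. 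When both $l$ and $m$ are odd, no $-1$ eigenvalue occurs, $\alpha=1$ is admissible, and the three candidates $\cos(\pi/(l+1))$, $\cos(\pi/(m+1))$, $\cos(2\pi/N)$ correspond via the same expansion to convergence times $\frac{2}{\alpha\pi^2}l^2$, $\frac{2}{\alpha\pi^2}m^2$, and $\frac{2}{\alpha\pi^2}(N/2)^2$; taking the largest gives the claimed maximum.

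The main obstacle will be the parity bookkeeping: correctly identifying which of the six extreme eigenvalues (two endpoints of three families) realises the maximum modulus in each of the four parity combinations of $(l,m)$, and in particular isolating the both-even sub-case so that the $\alpha<1$ hypothesis can be invoked to suppress the checkerboard mode associated with $-1$. Everything else is entirely analogous to the blossom analysis: the trigonometric Taylor expansion converts the spectral bounds into the asserted quadratic convergence times, and the convergence time formula $T \approx 2/(\alpha\theta^2)$ for $\rho = \cos\theta$ carries over unchanged.
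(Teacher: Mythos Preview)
Your approach is precisely what the paper intends: it offers no standalone proof here, only the remark that the result follows ``similarly to Theorem~\ref{th_blossom}'', so reading off the spectral radius from the explicit eigenvalue list and applying the expansion $\cos\theta = 1 - \theta^2/2 + O(\theta^4)$ is the whole argument. Your parity bookkeeping for the cross-bar family via $2M \in \{N, N-1, N-2\}$ is exactly the right mechanism, and the both-odd and exactly-one-even sub-cases go through as you describe.

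One internal inconsistency is worth flagging. In the both-even sub-case you correctly identify that, once the smoothing $\alpha<1$ has suppressed the $-1$ eigenvalue to modulus $|1-2\alpha|$, the largest remaining modulus in the cross-bar family is $\cos(2\pi/N)$ (attained at $k=1$ and $k=M-1$), not $\cos(\pi/N)$. But you then quote the Taylor expansion of $\cos(\pi/N)$ to reach the stated $\tfrac{2}{\alpha\pi^2}N^2$. Carrying your own identification through would give $\tfrac{1}{2\alpha\pi^2}N^2$ for the cross-bar contribution in this sub-case, a factor of $4$ smaller than in the exactly-one-even sub-case, and one must then also recheck whether the loop eigenvalues $\cos(\pi/(l+1))$, $\cos(\pi/(m+1))$ become dominant. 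The paper's statement groups the two sub-cases under a single formula, so this constant-factor issue is already present in the theorem as stated rather than being a defect of your method; your plan otherwise reproduces the paper's (implicit) argument faithfully.
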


Therefore, the convergence time is $O((l + n + m)^2)$, i.e. quadratic in the number of matched edges.


\section{Related Work}
\label{sec:related}

The concept of balanced outcomes was introduced by Nash in \cite{N50} for the case of two players with exogenous profit options. This concept follows from a set of axioms and different axioms were subsequently considered; e.g. see \cite{M88}. 

Kleinberg and Tardos~\cite{KT08} considered the concept of Nash bargaining solutions on graphs where profit options available to a player are not exogenously given but determined by her position in the graph. They established relations between stable and balanced outcomes and devised a polynomial time algorithm for computing balanced outcomes. Their work left open the question on existence and properties of local dynamics.

A local dynamics for Nash bargaining on graphs was recently considered by Azar et al~\cite{ABCDP09}. This paper assumed a fixed matching of nodes and considered a local, so called edge-balanced dynamics, for outcome vector $\vec{x}$. They established that fixed points of this dynamics are balanced outcomes and that the convergence to the fixed point occurs in an exponential number of rounds. A concurrent and independent work by Celis, Devanur and Peres~\cite{CDP10} considered the rate of convergence of edge-balanced dynamics. Their approach is different from ours in using a reduction of edge-balanced dynamics to a random-turn game for a class of graphs with uniform edge weights. For this class of graphs, their convergence time is quadratic in the maximum path length of an auxiliary graph derived from the input graph and given matching. This class of graphs does not accommodate cycles, but accommodates paths, blossoms and bicycles and for these cases the bound is quadratic in the number of matched edges. Another difference with our work is that for each of our elementary subgraphs we provide explicit characterization of dynamics and tight asymptotic estimate of convergence rates (exact constant factors). 

The assumption that matching is fixed was removed by Kanoria et al by introducing a natural dynamics studied in~\cite{KBBCM10} and \cite{KBBCM10b}. It was shown in \cite{KBBCM10} that provided that there exists a unique Nash bargaining solution and the graph satisfies the positive gap condition, the natural dynamics converges to this Nash bargaining solution in a polynomial time. Specifically, they showed that there exists a constant $C > 0$ such that the convergence time is upper bounded by $C[W/\sigma + \log(\sigma/\epsilon)] n^{6+\delta}$, where $W$ is an upper bound on the maximum edge weight, $\sigma > 0$ is the gap and $\epsilon,\delta > 0$. In \cite{DV10}, we showed, using techniques similar to the ones introduced in this paper, that the bound in the number of nodes can be improved to $O( n^{4+\delta})$. Using a different approach, in \cite{KBBCM10b}, the authors established that if the maximum weight matching is unique, then there exists $T = O(n^4/g^2)$ such that for every initial value the natural dynamics induces the maximum-weight matching, for every $t\geq T$; where $g$ is the difference between the weight of the maximum-weight matching and the weight of the second best matching (we refer to as matching weight gap).

Finally, another related work is that on maximum-weighted matchings on graph. There indeed  is a close connection between stable outcomes and maximum weight matchings reflected by the similarity of the distributed algorithms considered for solving both problems. In particular, Bayati et al~\cite{BSS08} considered an auction-like algorithm, which is similar in spirit to the natural dynamics for solving the balanced allocation problem, and showed that for complete bipartite graphs with a unique maximum-weight matching, the convergence time is $O(W n/g)$ where $W$ is the maximum edge weight, $g$ is the matching weight gap and $n$ is the number of nodes.

\section{Conclusion}
\label{sec:conc}

In this paper we showed that some known Nash bargaining dynamics on graphs can (eventually) be characterized by linear dynamical systems and this enabled us to derive tight characterizations of their convergence rates. Note that if the dynamics, as restricted to each of the KT elementary graphs that arise in the KT decomposition, were decoupled then the previous analysis will yield $O(n^3)$ convergence time, since there are at most $n$ such structures each taking $O(n^2)$ time to converge.

An interesting direction for future work is to investigate the extent by which the dynamics on the different KT substructures are coupled under assumptions such as the positive gap condition of the KT decomposition or the matching weight gap. Another interesting direction is to analyze the bargaining power of nodes based on their network position.

barganing arxiv

\section{Appendix}

\subsection{Eigenvalues for a Blossom}
\label{sec:eigbloss}

Remark that an eigenvalue $\lambda$ and eigenvector $\vec{v}$ of matrix $\matA$ satisfy $\matA \vec{v} = \lambda \vec{v}$, i.e.
\begin{eqnarray}
\frac{1}{2}v_2 &=& \lambda v_1\label{equ:z1}\\
\frac{1}{2}v_{i-1} + \frac{1}{2}v_{i+1} &=& \lambda v_i, \ 1 < i < n+m\label{equ:z2}\\
-\frac{1}{2}v_n + \frac{1}{2}v_{n+m-1} &=& \lambda v_{n+m}\label{equ:z3}
\end{eqnarray}
Suppose $\lambda = \cos(\phi)$ and $\vec{v} = (\sin(\phi), \sin(2\phi),\ldots,\sin((n+m)\phi))^T$. Then, by elementary trigonometric identities we note that (\ref{equ:z1}) and (\ref{equ:z2}) hold for every $\phi$. On the other hand, (\ref{equ:z3}) is equivalent to
$$
\sin((n+m-1)\phi) = \sin(n\phi) + 2 \cos(\phi)\sin((n+m)\phi)
$$ 
which by using elementary trigonometric identities is equivalent to
$$
\sin\left(\frac{2n+m+1}{2}\phi\right)\cos\left(\frac{m+1}{2}\phi\right) = 0.
$$
Therefore, $\phi$ is either
$$
\phi_1 = \frac{2k_1}{2n+m+1}\pi\ \hbox{ or }\ \phi_2 = \frac{2k_2-1}{m+1}\pi
$$
where $k_1$ and $k_2$ are arbitrary integers. Since cosine is a periodic function, it can be readily checked that $\cos(\phi_1)$ attains all possible values over $k = 1,2,\ldots, n + \lfloor m/2 \rfloor$ and similarly for $\cos(\phi_2)$ over $k = 1,2,\ldots, \lceil m/2\rceil$.

\subsection{Eigenvalues for a Bicycle}
\label{sec:eig_bicycle}
If $\lambda$ is an eigenvalue of matrix $\matA$ with eigenvector $\vec{v}$, then we have
\begin{eqnarray*}
\lambda v_1&=& \frac{1}{2}v_2- \frac{1}{2}v_{l+1}\\
\lambda v_i&=& \frac{1}{2}v_{i-1}+ \frac{1}{2}v_{i+1}, i=2,\ldots, n+m+l-1\\
\lambda v_{n+m+l}&=& \frac{1}{2}v_{n+m+l-1}-\frac{1}{2} v_{l+n}\:.
\end{eqnarray*}

In the remainder, we separately consider two cases depending on whether either $l$ or $m$ is even, or otherwise.

\noindent
{\bf Case 1}: $l$ or $m$ is odd. 

Without loss of generality, suppose $l$ is odd. Let us introduce the following one-to-one linear transformation $\vec{z} = \matS \vec{v}$ where matrix $\matS$ is defined by $z_i=v_i+v_{l-i+1}$, for $i=1,\ldots,\lfloor l/2 \rfloor$, and $z_i=2v_{i}$ for $i=\lfloor l/2 \rfloor+1,\dots, n+l+m$. It is not difficult to verify that $\matS$ is non-singular and thus a matrix $\matB$ such that $\matA = \matS^{-1}\matB \matS$ is similar to $\matA$ and, therefore, has the same eigenvalues~\cite{HJ85}[Theorem~1.3.3]. 

Using the transformation $\vec{z} = \matS \vec{v}$ and $\matA \vec{v} = \lambda \vec{v}$, we have 
\begin{eqnarray*}
\lambda z_1&=& \frac{1}{2} z_2\\
\lambda z_{\lfloor l/2 \rfloor+1}&=& z_{\lfloor l/2 \rfloor}\\
\lambda z_{n+m+l}&=&\frac{1}{2}z_{n+m+l-1}-\frac{1}{2}z_{n+l}\:
\end{eqnarray*}
and for $i=2,\ldots, \lfloor l/2 \rfloor$ and $ i=\lfloor l/2 \rfloor+2,\ldots, n+l+m-1$, 
$$\lambda z_i=\frac{1}{2}z_{i-1}+\frac{1}{2}z_{i+1}\:.$$

Notice that $\lambda \vec{z} = \matS \matA \matS^{-1} \vec{z} = \matB \vec{z}$, and from the above identities
$$
\matB = \left(
\begin{array}{cc}
\matP & \matnull\\
\matQ & \matR
\end{array}
\right)
$$
where $\matP$ is a $\lceil l/2\rceil \times \lceil l/2\rceil$ tridiagonal matrix given by
$$
\matP=\left( \begin{array}{ccccc}
0 & 1/2 & 0& \cdots &0\\
1/2 & \ddots  & \ddots &\ddots &\vdots\\
0 & \ddots &\ddots &\ddots& 0\\
\vdots & \ddots & 1/2 & \ddots & 1/2\\
0&\cdots& 0& 1 & 0 \end{array} \right)\:,
$$
$\matQ$ is a $(n+m+\lfloor l/2\rfloor)\times \lceil l/2\rceil$ matrix with all elements equal to zero but the element in the first row and last column equal to $1/2$, and $\matR$ is a $(n+m+\lfloor l/2\rfloor) \times (n+m+\lfloor l/2\rfloor)$ matrix that corresponds to a blossom with $n+\lceil l/2\rceil$ matched stem edges and $m$ matched loop edges and is of the form (\ref{equ:Ablossom}) under Case~1 in Section~\ref{sec:blossom}.

Using the properties of determinants of block matrices, we observe that eigenvalues of $\matB$ consist of eigenvalues of matrices $\matP$ and $\matR$. Therefore, the eigenvalues of matrix $\matB$, and by similarity of matrix $\matA$, are
\begin{eqnarray}
& \cos\left(\frac{\pi (2k-1)}{l+1}\right), \:k=1,\ldots, \lceil l/2 \rceil,\label{equ:l1}\\ 
& \cos\left(\frac{\pi (2k-1)}{m+1}\right), \:k=1,\ldots, \lceil m/2 \rceil,\label{equ:l2}\\  
& \cos\left(\frac{2\pi k}{2n+l+m}\right), \:k=1,\ldots, n+\lfloor l/2\rfloor+\lfloor m/2\rfloor\label{equ:l3}
\end{eqnarray}
where (\ref{equ:l1}) are eigenvalues of matrix $\matP$, which is easily derived and thus omitted, and (\ref{equ:l2}) and (\ref{equ:l3}) are eigenvalues of $\matR$ which we have already showed in Section~\ref{sec:blossom}.

It is not difficult to see that the above eigenvalues hold whenever either $l$ or $m$ is odd.

\noindent
{\bf Case 2}: both $l$ and $m$ are even.

We use a similar but different one-to-one transformation as under Case~1: $z_i=v_i+v_{l-i+1}$, for $i=1,\ldots, l/2$, $z_i=v_{i}+v_{i+1}$ for $i=l/2+1,\ldots, n+l+m/2$, and $z_{i+n+l}=v_{n+l+i}+v_{m+n+l-i+1}$ for $i=0,\ldots, m/2$. We have that
\begin{eqnarray*}
\lambda z_1 &=& \frac{1}{2}z_2\\
\lambda z_{l/2} &=& \frac{1}{2}z_{l/2-1}+\frac{1}{2}z_{l/2}\\
\lambda z_{n+l+m/2} &=& \frac{1}{2}z_{n+l+m/2}+\frac{1}{2}z_{n+l+m/2+1}\\
\lambda z_{n+m+l} &=& \frac{1}{2}z_{n+m+l-1}\:
\end{eqnarray*}
and for $i= l/2+1,\ldots, n+l+m/2-1$ and $i=n+l+m/2+1,\ldots, n+m+l-1$, 
$$
\lambda z_i=\frac{1}{2}z_{i-1}+\frac{1}{2}z_{i+1}\:.$$

Similarly as for Case~1, using the properties of determinants of block matrices, we have that the eigenvalues of $\matA$ are
\begin{eqnarray*}
&& \cos\left(\frac{\pi (2k-1)}{l+1}\right), \:k=1,\ldots, l/2,\\ 
&& \cos\left(\frac{\pi (2k-1)}{m+1}\right), \:k=1,\ldots, m/2,\\  
&& \cos\left(\frac{\pi k}{n+l/2+m/2}\right), \:k=1,\ldots, n+ l/2+ m/2-1,\\
&& \hbox{and } -1
\end{eqnarray*} 
where $(1,-1,1,-1,\ldots,1,-1)^T$ is the eigenvector of eigenvalue $-1$.

\end{document}